\def\BibTeX{{\rm B\kern-.05em{\sc i\kern-.025em b}\kern-.08em
    T\kern-.1667em\lower.7ex\hbox{E}\kern-.125emX}}
\newtheorem{theorem}{Theorem}
\begin{document}



\title{Exploiting Storage for Computing: Computation Reuse in Collaborative Edge Computing}

\author{\IEEEauthorblockN{Xingqiu He\IEEEauthorrefmark{1},
        Chaoqun You\IEEEauthorrefmark{2},
        Tony Q. S. Quek\IEEEauthorrefmark{3}}
        \IEEEauthorblockA{\IEEEauthorrefmark{1}\IEEEauthorrefmark{2}Fudan University, \IEEEauthorrefmark{3}Singapore University of Technology and Design \\
        E-mail: \IEEEauthorrefmark{1}hexqiu@gmail.com,
        \IEEEauthorrefmark{2}chaoqunyou@gmail.com,
        \IEEEauthorrefmark{3}tonyquek@sutd.edu.sg
        }
    }

\maketitle

\begin{abstract}
Collaborative Edge Computing (CEC) is a new edge computing paradigm that enables neighboring edge servers to share computational resources with each other. 
Although CEC can enhance the utilization of computational resources, it still suffers from resource waste.
The primary reason is that end-users from the same area are likely to offload similar tasks to edge servers, thereby leading to duplicate computations. 
To improve system efficiency, the computation results of previously executed tasks can be cached and then reused by subsequent tasks. 
However, most existing computation reuse algorithms only consider one edge server, which significantly limits the effectiveness of computation reuse.
To address this issue, this paper applies computation reuse in CEC networks to exploit the collaboration among edge servers.
We formulate an optimization problem that aims to minimize the overall task response time and decompose it into a caching subproblem and a scheduling subproblem.
By analyzing the properties of optimal solutions, we show that the optimal caching decisions can be efficiently searched using the bisection method.
For the scheduling subproblem, we utilize projected gradient descent and backtracking to find a local minimum.
Numerical results show that our algorithm significantly reduces the response time in various situations.
\end{abstract}

\begin{IEEEkeywords}
Computation Reuse, Collaborative Edge Computing, Collaborative Caching, Scheduling Algorithm
\end{IEEEkeywords}

\section{Introduction}
In the past decade, the rapid evolution of machine learning and Internet of Things has driven the 
development of many computation-intensive applications, such as AR/VR~\cite{qiao2019web} and mobile gaming~\cite{lin2016cloudfog}.
To fulfill the stringent Quality of Service (QoS) requirements of these applications,
edge computing is regarded as a promising computing paradigm that provides computation-offloading services in proximity to end-users.
Nevertheless, due to limited resources at edge servers,
an edge computing system may suffer from severe performance degradation when dealing with bursty workloads~\cite{he2021providing}. 
To address this issue, Collaborative Edge Computing (CEC)~\cite{tran2017collaborative}
has been proposed to alleviate load imbalance by offloading computation tasks from busy edge servers to idle ones.
Extensive studies~\cite{xiao2017qoe, chen2018computation, li2019learning, lyu2018distributed} have shown that CEC effectively improves the utilization of computational resources and thus
greatly enhances the system performance.
  
Despite the significant improvement CEC brings to resource utilization, computational resources in CEC networks are still grossly wasted due to duplicate computations.
As shown in~\cite{al2022promise}, clustered end-users are likely to offload similar computation tasks to edge servers due to temporal, spatial, or semantic correlations among input data. 
For example, end-users in AR/VR may request the rendering of the same frame if they have similar poses~\cite{meng2020coterie},
and intelligent surveillance systems often need to analyze similar images captured by the camera in a short period of time~\cite{al2022promise}. 
Based on this observation, \textit{computation reuse}~\cite{guo2018potluck, guo2018foggycache} has been proposed to further reduce resource waste.
The main idea is, the computation results of previously processed tasks can be cached and reused if subsequent tasks have similar input. 
This mechanism not only avoids duplicate computations, but also reduces the system response time, as tasks with cached results can be accomplished immediately without being processed from scratch.

Recently, extensive studies have been conducted to explore computation reuse at a single edge server~\cite{lee2019case, bellal2021coxnet, al2022promise, nour2021whispering}.
However, the benefits of computation reuse can only be fully realized through the collaboration of multiple edge servers.
The main reason is, with collaboration, the computation results cached at one edge server can be reused by tasks offloaded to other edge servers.
This not only improves the cache hit rate for each cached result, but also increases the total number of cached results in the system, as we no longer need to redundantly cache the same results at different edge servers.

Despite the aforementioned benefits, implementing computation reuse in CEC
networks is particularly challenging due to the following two reasons.
First, compared to the computation reuse at a single edge server, the mutual influence among caching decisions at different edge
servers is hard to quantify.
Second, compared to the conventional CEC networks, the introduction of computation reuse changes the distribution of task processing
time, thereby making the caching and scheduling decisions deeply intertwined and difficult to optimize.


In this paper, we propose a computation reuse algorithm for CEC networks.
Our objective is to minimize the weighted response time by jointly optimizing the computation result caching decisions, the cache-searching strategy, the workload scheduling policy, and the computational power allocation scheme.
The formulated problem is a mixed integer programming problem and is proved to be NP-hard even for an extremely simplified special case.
In order to solve it, we apply alternating minimization to decompose the original problem into a caching subproblem and a scheduling subproblem.
Based on the necessary conditions of optimal solutions and the special properties of subproblems,
we propose two efficient approximate algorithms that converge to local minima in polynomial time.

The main contributions are summarized as follows.
\begin{itemize}
    \item We propose to implement computation reuse in CEC networks and formulate the corresponding response time minimization problem.
        To reduce the difficulty of algorithm design, we decompose the formulated optimization into 
        a caching subproblem and a scheduling subproblem.
    \item For the caching subproblem, we introduce the concept of storage efficiency and prove that it 
        satisfies certain conditions at optimal solutions.
        Based on this property, we further identify the monotonicity of storage efficiency and propose
        an efficient approximate algorithm based on the bisection search.
        Theoretical analysis guarantees the obtained solution is near-optimal in practical situations.
    \item For the scheduling subproblem, we apply projected gradient descent to find a local minimum iteratively.
        A key step in projected gradient descent is projecting the obtained solution to the feasible region.
        We use variable substitution to transform the feasible region to a probability simplex so that such
        projection can be implemented efficiently.
        We also ameliorate the backtracking line search to accelerate the convergence process and guarantee the stability of obtained solutions.
    \item We conduct extensive simulations to validate the effectiveness of the proposed algorithm.
        Our algorithm is compared with three benchmarks and
        numerical results show that our algorithm significantly improves the system performance under various settings.
\end{itemize}

The rest of the paper is organized as follows. 
In Section \ref{section:related_work}, we review related works.
Section \ref{section:system_model} introduces the system model and formulates the corresponding optimization problem.
Section \ref{section:algorithm_design} describes the algorithm design and related analysis.
The numerical results and conclusions are presented in Section \ref{section:simulation} and Section \ref{section:conclusion}.

\section{Related Work} \label{section:related_work}
\textbf{Collaborative Edge Computing.}
As an effective load-balancing approach, the scheduling algorithms in CEC networks have been extensively studied from different perspectives.
The authors in \cite{xiao2017qoe} developed a distributed optimization framework that achieves optimal tradeoff between the user
experience and power efficiency of edge servers.
The work in \cite{chen2018computation} proposed a novel online peer offloading algorithm to optimize the long-term
latency under the time-average energy consumption constraint.
A distributed optimization for cost-effective offloading decisions was investigated in \cite{lyu2018distributed}.
The authors in \cite{li2019learning} and \cite{he2021providing} consider scheduling algorithms that aim to meet the deadlines of tasks.
However, these algorithms only optimize the allocation of computational resources without considering the caching problem.

\textbf{Collaborative Caching.}
To fulfill the ever-increasing demand for content delivery, the collaborative caching among edge servers is studied in 
\cite{saputra2019novel, ostovari2016efficient, wang2017multi, zhang2017cooperative}.
These works usually minimize the content-access delay or cost of content providers based on 
content popularity \cite{ostovari2016efficient, wang2017multi}, spectrum efficiency \cite{zhang2017cooperative}, network topology and user' demands\cite{saputra2019novel}. 
In addition to content caching, there are also plenty of works 
\cite{ma2020cooperative, feng2021collaborative, xu2021energy, zhong2021cooperative}
considering collaborative service caching and computation offloading.
They assume tasks must be processed by edge servers that have cached related data so a joint optimization of 
caching and offloading is required.
For example, the authors in \cite{ma2020cooperative} developed an iterative algorithm to minimize the response time
by jointly optimizing the service caching decisions and workload scheduling policies.
However, the service caching decisions do not influence the service time of tasks, hence the problems formulated in the above works
are usually much simpler than the problem considered in this paper.

\textbf{Computation Reuse.}
The concept of computation reuse is first used to deduplicate redundant computation across multiple 
applications \cite{guo2018potluck} and multiple devices \cite{guo2018foggycache}.
After that, this technique is extensively explored in edge computing to improve the system efficiency 
\cite{lee2019case, bellal2021coxnet, al2022promise}.
To realize computation reuse among multiple edge servers, the authors in \cite{al2022reservoir} proposed a framework
that runs on top of named-data networking \cite{zhang2014named}.
However, the work focuses on communication among edge servers and does not optimize the caching and scheduling decisions.
The work in \cite{nour2021whispering} is the first one that designs the scheduling algorithm in a principled way,
but the modeling is simplified (e.g. the caching decisions are not considered) and thus the algorithm's practicality is limited.
To the best of our knowledge, this paper is the first work that considers computation reuse in CEC networks.

\section{System Model and Problem Formulation} \label{section:system_model}

In this section, we describe the considered CEC system with computation reuse and formulate the corresponding response time minimization problem.
Complexity analysis shows that the formulated problem is NP-hard even for a significantly simplified special case.

\subsection{Computation Reuse Model}
\begin{figure}[t]
\centering
\includegraphics[width=0.4\textwidth]{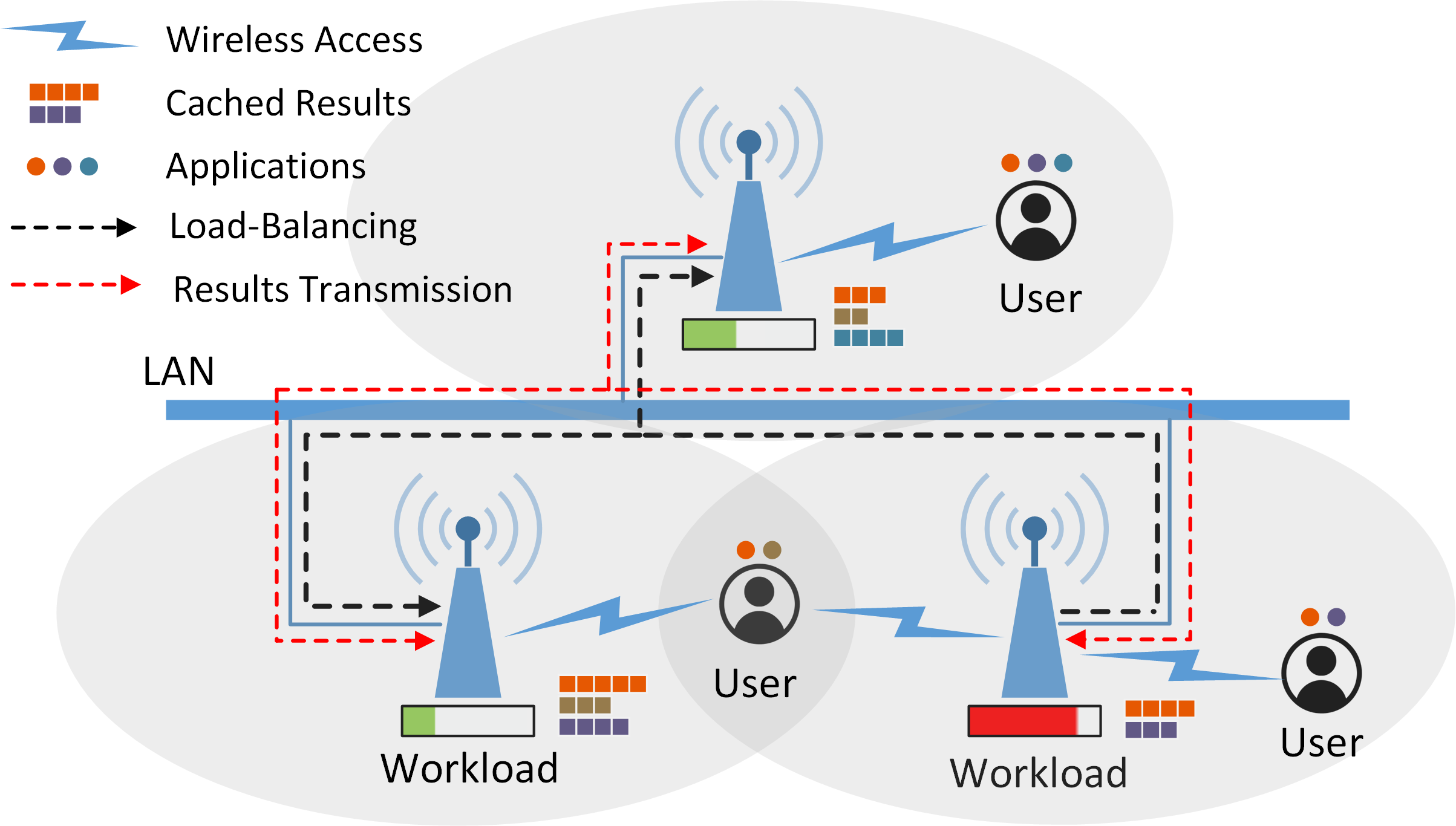}
\caption{A simple example of the considered system model.}
\label{fig:system}
\end{figure}
As shown in Fig.\ref{fig:system}, we consider a CEC network with $N$ BSs indexed by $\mathcal{N} = \{1, 2, \dots, N\}$. 
Each BS $n\in\mathcal{N}$ is co-located with an edge server with computational capacity $C_n^c$ (in CPU frequency) and storage capacity
\footnote{In current practice, the computation results are usually stored in RAM so $C^s_n$ refers to the RAM space. But our system model and proposed algorithm also apply to disk storage situations.} 
$C_n^s$ (in GB), respectively.
Suppose there are $A$ different applications indexed by $\mathcal{A} = \{1, 2,\dots, A \}$ running in the CEC network. 
The end-users running these applications may offload computation tasks to nearby BSs with different input data.
Taking an AR/VR application for example, the input data of each task indicate the user's pose, comprised of 3D position and 3D orientation.
The output of the task is the rendered frame to be displayed on the user's head-mounted displays.

According to the results in \cite{al2022reservoir}, 
for a given application, computation tasks with sufficiently similar input data usually have the same outputs,
so the results of previous tasks can be reused by subsequent tasks. 
Specifically, for application $a\in\mathcal{A}$, suppose we have identified a set of typical inputs $\mathcal{K}^a = \{1,2,\dots, K^a\}$ from its historical data. 
If a new task of application $a$ is offloaded from end-users, we can first find the closest typical input using fast and space-efficient mechanisms, such as Locality Sensitive Hashing (LSH)~\cite{datar2004locality} and Feature Hashing (FH)~\cite{weinberger2009feature}. 
After that, we compute the similarity between the two inputs according to the chosen metric (e.g., structural similarity \cite{wang2004image} or cosine similarity \cite{singhal2001modern}). 
If the similarity exceeds a certain threshold and the output of the corresponding typical input has been cached, the computation result can be directly returned without processing the task from scratch.

Although computation reuse saves a significant amount of computational resources when the target result is cached, 
it also incurs extra computational costs due to operations like hashing and similarity calculation. 
Therefore, in some cases, it may be beneficial to directly process tasks without searching for the cached results, 
especially when the probability of successfully finding the target result is low. 
To allow for a flexible cache-searching scheme, we introduce a binary variable $y_n^a \in \{0, 1\}$ to indicate whether 
we will search for a cached result for computation tasks from application $a$ at BS $n$. 
If $y^a_n=0$, then all tasks of application $a$ will be computed from scratch at BS $n$, while $y^a_n=1$ indicates otherwise.

Clearly, the optimal value of $y^a_n$ depends on how often we successfully find the target result.
Suppose we are processing a task at BS $n$, then we can define the \textit{local hit rate} $P_{lhr}^{a,n}$ and \textit{neighboring hit rate} $P^{a,n}_{nhr}$ as the probabilities that the desired computation result is cached at BS $n$ and the neighboring BSs, respectively.
For any typical input $k\in\mathcal{K}^a$, let $x^a_{nk}\in\{0,1\}$ indicate whether the output of $k$ is cached at BS $n$,
and let $p_k^a$ denote the probability that a task from application $a$ is sufficiently similar to $k$.
Then the local and neighboring hit rate can be expressed as $P_{lhr}^{a,n} = \sum_{k\in\mathcal{K}^a} x_{nk}^a p_k^a$
and $P^{a,n}_{nhr} = \sum_{k\in\mathcal{K}^a} p^a_k(1-x^a_{nk})I_{>0}(\sum_{m\in\mathcal{N}, m\ne n}x^a_{mk})$,
where $I_{>0}(\cdot)$ is the indicator function that returns $1$ if the input value is positive.
For convenience, we define $P^{a,n}_{hr} = P^{a,n}_{lhr} + P^{a,n}_{nhr}$ as the aggregate hit rate.
Apparently, $P^{a,n}_{hr}$ is the hit rate of application $a$ in the whole system and remains the same for different BS $n$.
Therefore, we can drop the superscript $n$ and use $P^{a}_{hr}$ afterward.

Although it is desirable to enhance the hit rate as much as possible, the number of cached results is constrained by the storage capacities of BSs.
Suppose the data size of the computation result corresponding to typical input $k$ is $s_k^a$, then the cached data at BS $n$ should not exceed its storage capacity, which can be expressed as
\begin{equation}
    \sum_{a\in\mathcal{A}} \sum_{k\in\mathcal{K}^a} x^a_{nk}s^a_k \leq C^s_n, \quad \forall n\in\mathcal{N}. \label{cons:storage}
\end{equation}

\subsection{Task Scheduling Model}
In this subsection, we introduce the decision variables related to task scheduling and derive the expression of response time.
Let $D^a$ be the average response time of application $a$. Generally, $D^a$ consists of two parts: the computation delay that is caused by processing the computation tasks and the transmission delay that is caused by transmitting tasks and computation results among BSs. In the following, we will provide the formulations of computation delay and transmission delay, respectively.

Assume that the task arrival at BS $n$ follows a Poisson process with expected rate $R_n^a$. For convenience, we define the total arrival rate of all BSs in the system as $R^a = \sum_{n\in\mathcal{N}} R^a_n$. 
In CEC, the computation workload can be redistributed among collaborative BSs.
Therefore, we introduce $\lambda_n^a \in [0,1]$ to denote the fraction of application $a$'s workload distributed to BS $n$. Apparently, for any application $a$, the sum of its workload distribution across all BSs should be equal to $1$, hence we have
\begin{equation}
    \sum_{n\in\mathcal{N}} \lambda^a_n = 1, \quad \forall a\in\mathcal{A}. \label{cons:scheduling}
\end{equation}

As discussed in the previous subsection, the processing procedures of tasks are different depending on the value of $y^a_n$.
Therefore, we will derive the expression of computation delay case by case.
Let $f^a_n$ be the computational capability (in CPU frequency) allocated to application $a$ at BS $n$
and $w^a$ be the average workload (in CPU cycles) of tasks from application $a$. 
As in~\cite{xu2018joint, ma2020cooperative}, we assume the workload of tasks follows an exponential distribution.
When $y_n^a = 0$, the computation process satisfies the M/M/1 queueing model, so the average computation delay without cache-searching is
\begin{equation}
    D^{a,0}_n = \frac{1}{\mu^{a,0}_n - \lambda^a_n R^a}, \label{eq:delay_no_cache}
\end{equation}
where $\mu^{a,0}_n = f^a_n/w^a$ is the average processing rate with respect to the number of tasks.
To guarantee the stability of the queue, we must ensure
\begin{equation}
    \lambda^a_n R^a < \mu^{a,0}_n, \quad \forall a\in\mathcal{A}, \forall n\in\mathcal{N}. \label{cons:stability_0}
\end{equation}

When $y^a_n=1$, we will first search the cached results before processing. 
If the task's computation result is cached and reused, then the computation cost of the task is aroused by the searching process, which is denoted by $w^s$ (in CPU cycles). 
Otherwise, the task still needs to be processed from scratch.
Consequently, the Cumulative Distribution Function (CDF) of the total computation cost $w$ can be derived as
\begin{equation*}
    F(w) = 
    \begin{cases}
        1 - (1-P^{a}_{hr})e^{-(w-w^s)/w^a}, & w \geq w^s \\
        0, & w < w^s
    \end{cases},
\end{equation*}
where $F(w)$ jumps from $0$ to $P^{a}_{hr}$ at $w^s$, and gradually increases to $1$ thereafter.

According to the M/G/1 queueing model, the average computation delay with cache-searching is
\begin{align}
    D^{a,1}_n = &\frac{1}{\mu^{a,1}_n} + \frac{\lambda^a_nR^a}{2\mu^{a,1}_n(\mu^{a,1}_n-\lambda^a_nR^a)} \notag \\
    &+ \frac{(1-P^{a}_{hr})(1+P^{a}_{hr})\lambda^a_nR^a\mu^{a,1}_n}{2(\mu^{a,1}_n-\lambda^a_nR^a)(\mu^{a,0}_n)^2}
    \label{delay:cache}
\end{align}
where $\mu^{a,1}_n = f^a_n/(w^s+(1-P^{a}_{hr})w^a)$.
To guarantee the stability of the queue, we must have
\begin{equation}
    \lambda^a_n R^a < \mu^{a,1}_n, \quad \forall a\in\mathcal{A}, \forall n\in\mathcal{N}.
    \label{cons:stability_1}
\end{equation}
Combining \eqref{cons:stability_0} and \eqref{cons:stability_1} yields
\begin{equation}
    \lambda^a_n R^a - (1-y^a_n)\mu^{a,0}_n - y^a_n\mu^{a,1}_n < 0, \quad \forall a\in\mathcal{A}, \forall n\in\mathcal{N}.
    \label{cons:stability}
\end{equation}

Notice that if the target result is cached in other BSs, it must be transmitted to the BS $n$ for reuse. As in \cite{ma2020cooperative}, the transmission delay from neighboring BSs to BS $n$ can be assumed to be a constant $D^t_n$.
Hence, the average processing delay of application $a$ at BS $n$ is given by
\begin{equation}
    D^a_n = (1-y^a_n)D^{a,0}_n + y^a_n(D^{a,1}_n + P^{a,n}_{nhr}D^t_n).
    \label{eq:D^a_n}
\end{equation}

In addition to the processing delay $D_n^a$, the response time $D^a$ of application $a$ also includes the transmission delay caused by transmitting tasks and computation results among BSs.
If $\lambda^a_nR^a - R^a_n > 0$, then there are extra tasks offloaded from other BSs to BS $n$ and the induced transmission delay is
$D^t_n(\lambda^a_nR^a - R^a_n)/R^a$.
Otherwise, if $\lambda^a_nR^a - R^a_n < 0$, then part of BS $n$'s tasks will be offloaded to the other BSs.
The transmission delay induced by returning computation results to BS $n$ is $D^t_n(R^a_n - \lambda^a_n R^a)/R^a$.
Based on the above discussion, the expression of $D^a$ is given as
\begin{equation*}
    D^a = \sum_{n\in\mathcal{N}} \left[ \lambda^a_nD^a_n + \frac{|\lambda^a_n R^a - R^a_n|}{R^a}D^t_n \right].
\end{equation*}

In practice, it may be beneficial to offload part of tasks to the cloud when the local BSs are overloaded.
This situation can be accommodated in our model by regarding the remote cloud as a neighboring BS with sufficient computational power and large transmission delays.
If the bandwidth of the core network is limited, one can also treat the transmission process as an M/M/1 queueing model (as in \cite{ma2020cooperative}).
The resulting transmission delay is a function of the offloaded workload, which takes a similar form of \eqref{eq:delay_no_cache},
and hence brings no extra complexity to the problem-solving.
Therefore, to keep the description concise and focus on the key parts of the problem, we do not explicitly consider offloading to the cloud in our model.

\subsection{Problem Formulation}
This paper aims to minimize the weighted response time of different applications by jointly optimizing the computation result caching decisions $x^a_{nk}$,
the workload scheduling policy $\lambda^a_n$, the cache-searching strategy $y^a_n$, and the computational capability allocation $f^a_n$.
Let $\phi^a$ be the weight of application $a$ that reflects its priority, then the corresponding optimization problem can be formulated as:
\begin{align}
    \min_{\bm{x},\bm{y},\bm{\lambda},\bm{f}}\quad & \sum_{a\in\mathcal{A}} \phi^a D^a \label{lmp} \\
    s.t.\quad & \eqref{cons:storage}, \eqref{cons:scheduling}, \eqref{cons:stability} \quad \tag{\ref{lmp}{a}} \label{lmp:before_cons} \\
    & \sum_{a\in\mathcal{A}} f^a_n \leq C^c_n, \quad \forall n\in\mathcal{N} \tag{\ref{lmp}{b}} \label{lmp:computing} \\
    & x^a_{nk}\in\{0,1\}, \quad \forall a\in\mathcal{A}, \forall n\in\mathcal{N}, \forall k\in\mathcal{K}^a \tag{\ref{lmp}{c}} \label{lmp:variable_x} \\
    & y^a_n\in\{0,1\}, \lambda^a_n, f^a_n \geq 0, \quad \forall a\in\mathcal{A}, \forall n\in\mathcal{N} \tag{\ref{lmp}{d}} \label{lmp:variable_ylf}
\end{align}
where $\bm{x}, \bm{y}, \bm{\lambda}, \bm{f}$ are vectors of corresponding decision variables.
Constraint \eqref{lmp:before_cons} contains the caching and scheduling constraints described above.
Constraint \eqref{lmp:computing} requires that the computational power allocated to applications cannot exceed the BS's computational capacity.
Constraint \eqref{lmp:variable_x} and \eqref{lmp:variable_ylf} define the feasible regions of control variables.

The formulated problem \eqref{lmp} is a mixed integer non-linear programming problem and has no polynomial-time solution in general.
In fact, we can show that the problem \eqref{lmp} is NP-hard even for a significantly simplified special case, as stated in the following theorem:
\begin{theorem}
    The problem \eqref{lmp} is NP-hard even if there is only one BS and one application in the system.
    \label{theorem:np}
\end{theorem}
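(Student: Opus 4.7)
My plan is to reduce the 0-1 Knapsack problem, a well-known NP-hard problem, to an instance of \eqref{lmp} with a single BS ($N=1$) and a single application ($A=1$). In this degenerate setting, constraint \eqref{cons:scheduling} forces $\lambda^a_1=1$, the neighboring hit rate vanishes so $P^a_{hr}=P^{a,1}_{lhr}=\sum_{k}x^a_{1k}p^a_k$, and there is no inter-BS transmission delay. The only nontrivial decisions are therefore the caching vector $\bm{x}$, the binary cache-search flag $y^a_1$, and the computational allocation $f^a_1$, which can be fixed to its boundary value $C^c_1$ since $D^{a,1}_1$ is strictly decreasing in $f^a_1$.

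Given a Knapsack instance with items $\{1,\dots,K\}$ of values $v_k>0$, weights $w_k>0$, and capacity $W$, I would construct a single-BS, single-application instance of \eqref{lmp} by taking $\mathcal{K}^a=\{1,\dots,K\}$, $s^a_k=w_k$, $p^a_k=v_k/V$ with $V=\sum_k v_k$, $C^s_1=W$, and fixing $y^a_1=1$. The storage constraint \eqref{cons:storage} then becomes exactly the Knapsack capacity constraint $\sum_k x^a_{1k}w_k\le W$, and the aggregate hit rate $P^a_{hr}$ is a positive affine function of the total value $\sum_k x^a_{1k}v_k$ packed into the knapsack. I would then choose the remaining parameters $w^a$, $w^s$, $R^a_1$, $C^c_1$ so that the stability constraint \eqref{cons:stability_1} is slack for every feasible $\bm{x}$, ensuring the objective is well-defined.

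The main step is to verify that, under this reduction, the objective is a strictly decreasing function of $P^a_{hr}$, so that minimizing $D^a$ is equivalent to maximizing $\sum_k x^a_{1k}v_k$. Inspecting \eqref{delay:cache}, I note that $\mu^{a,1}_1=f^a_1/(w^s+(1-P^a_{hr})w^a)$ is increasing in $P^a_{hr}$, that $\mu^{a,0}_1$ is independent of $P^a_{hr}$, and that the factor $(1-P^a_{hr})(1+P^a_{hr})=1-(P^a_{hr})^2$ in the last term is decreasing in $P^a_{hr}$. Thus each of the three summands of $D^{a,1}_1$ is monotonically non-increasing, and strictly decreasing on at least one of them; a short calculation confirms $\partial D^{a,1}_1/\partial P^a_{hr}<0$ throughout the feasible region. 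Consequently a polynomial-time algorithm for \eqref{lmp} would solve the constructed instance in polynomial time, extract the optimal $\bm{x}$, and recover an optimal Knapsack solution.

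The potential obstacles are (i) handling the $y$-flag cleanly, which I would resolve by choosing $w^s$ sufficiently small so that $y^a_1=1$ is strictly preferable to $y^a_1=0$ for every feasible $\bm{x}$ (comparing \eqref{eq:delay_no_cache} and \eqref{delay:cache} directly), and (ii) making the monotonicity argument unambiguous in the presence of the queueing nonlinearity; the latter is the main technical step, but is essentially a one-variable calculus check once $\lambda^a_1,R^a,f^a_1,w^a,w^s$ are fixed by the reduction.
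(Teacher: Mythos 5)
Your proposal is correct and follows essentially the same route as the paper's proof: in the single-BS, single-application case the objective is monotone decreasing in the hit rate, so minimizing delay under the storage constraint is exactly a 0-1 knapsack instance. You are in fact more careful than the paper on two points it glosses over --- forcing $y^a_1=1$ by taking $w^s$ small and explicitly verifying $\partial D^{a,1}_1/\partial P^a_{hr}<0$ --- but the underlying reduction is identical.
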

\begin{proof}
   Please see Appendix \ref{appendix:1}.
\end{proof}

Theorem \ref{theorem:np} demonstrates that it is computationally prohibitive to compute the optimal caching decisions even for a single BS and a single application.
In realistic situations, we also need to consider the correlation between caching decisions at different BSs,
the workload scheduling policy among BSs, and the allocation of computational power to each application.
All of these aspects substantially complicate the formulated problem \eqref{lmp} and make it very challenging to solve.

\section{Algorithm Design and Theoretical Analysis} \label{section:algorithm_design}
In this section, we first relax constraint \eqref{cons:stability} and show that the optimal value of $y^a_n$ can be obtained
by comparing $D^{a,0}_n$ and $D^{a,1}_n$.
After that, the relaxed problem is decomposed into two subproblems by using the Alternating Minimization (AM) 
method.\footnote{Also known as the Block Coordinate Descent method.}
Based on the necessary conditions of the optimal solutions, two efficient algorithms are proposed to solve the resulting subproblems.

\subsection{Problem Relaxation and Decomposition}

For convenience, we temporarily relax constraint \eqref{cons:stability} and will re-consider it in section \ref{subsection:scheduling}.
To optimize the weighted response time, we should choose the cache-searching strategy that minimizes the processing delay
for each BS and each application.
According to \eqref{eq:D^a_n}, it is obvious that the optimal value of $y^a_n$ is $y^a_n=1$ when 
$D^{a,0}_n > D^{a,1}_n + P^{a,n}_{nhr}D^t_n$ and $y^a_n=0$ otherwise.
Therefore, $\bm{y}$ is automatically determined once $\bm{x}, \bm{\lambda}$ and $\bm{f}$ are resolved.
Hence, we will not consider $\bm{y}$ explicitly in the following analysis.


In problem \eqref{lmp}, different decision variables are coupled in the non-convex objective function, thereby making it difficult for the variables to be optimized simultaneously. 
To reduce the difficulty of algorithm design, we resort to the AM method to decouple the joint optimization problem \eqref{lmp} into a sequence of subproblems, each of which includes a subset of variables. 
In our problem, the decision variables are partitioned into two groups:
the caching variables $x^a_{nk}$ and the scheduling variables $\lambda^a_n, f^a_n$.
The corresponding caching subproblem aims to find the optimal caching decisions with the given scheduling policy, which is given as follows
\begin{align}
    \min_{\bm{x}}\quad & \sum_{a\in\mathcal{A}} \phi^a D^a \label{subproblem:caching} \\
    s.t.\quad & \eqref{cons:storage} \mbox{ and } \eqref{lmp:variable_x}. \quad \notag
\end{align}
The scheduling subproblem aims to optimize the allocation of workload and CPU frequency with fixed caching decisions, which is given as follows
\begin{align}
    \min_{\bm{\lambda},\bm{f}}\quad & \sum_{a\in\mathcal{A}} \phi^a D^a \label{subproblem:scheduling} \\
    s.t.\quad & \eqref{cons:scheduling}, \eqref{lmp:computing}, \eqref{lmp:variable_ylf}. \quad \notag
\end{align}
According to the theoretical results in \cite{bezdek2003convergence}, if we optimize \eqref{subproblem:caching}
and \eqref{subproblem:scheduling} alternately, the obtained decision variables will eventually converge to local minima.
Unfortunately, due to the difficulty of the original problem, both subproblems are still hard to solve.
Specifically, the caching subproblem is also NP-hard according to the proof of Theorem \ref{theorem:np}.
In addition, the objective function in the scheduling subproblem is non-convex and thus has no efficient solutions in general.
To reduce the algorithm's time complexity, in the next two subsections, we will propose two polynomial-time approximation algorithms
based on the necessary conditions of optimal solutions.

\subsection{Caching Subproblem} \label{subsection:caching}

The mutual influence between caching decisions at different BSs is very complicated, so we start by focusing on only one BS.
Let $\bm{x}_n$ and $\bm{x}_{-n}$ be the caching decisions at BS $n$ and the rest BSs, respectively.
Our goal is to optimize $\bm{x}_n$ for any given $\bm{x}_{-n}$.
For convenience, we temporarily relax the binary constraint of $x^a_{nk}$ and assume it is a real number in $[0, 1]$. The relaxed caching subproblem at BS $n$ is called problem $P^c_n$.
We will reconstruct a binary caching decision after obtaining the optimal solution of $P^c_n$.

According to the expressions of $D^a_n$ and $D^{a,1}_n$, the influence of caching decisions on the objective function is through the hit rate.
One can easily verify that the objective function is non-increasing with respect to the local hit rate $P^{a,n}_{lhr}$,
hence it is also non-increasing with respect to the caching decisions $x^a_{nk}$.
Therefore, to minimize the objective function, one should increase the value of $x^a_{nk}$ as much as possible.
However, in our problem, the caching decisions at the same BS are constrained by the storage space.
To maximize the utilization efficiency, the storage space should be allocated to the $x^a_{nk}$ that contributes most to the reduction of the value of the objective function.
Inspired by this idea, we use $\sigma^{a}_{nk} = x^{a}_{nk} s^a_k$ to denote the storage space allocated to $x^a_{nk}$ and 
define the storage efficiency of $x^a_{nk}$ as
\begin{align}
    \epsilon(x^a_{nk}) &= \frac{\partial \sum_{a'\in\mathcal{A}} \phi^{a'} D^{a'}}{\partial \sigma^{a}_{nk}} \notag \\
    &= \frac{\partial \sum_{a'\in\mathcal{A}} \sum_{n'\in\mathcal{N}} \phi^{a'} \lambda^{a'}_{n'} y^{a'}_{n'} (D^{a',1}_{n'} + P^{a',n'}_{nhr} D^t_{n'})}
    {\partial \sigma^a_{nk}} \notag \\
    &= \sum_{n'\in\mathcal{N}} \phi^a \lambda^a_{n'} y^a_{n'} \left( \frac{\partial D^{a,1}_{n'}}{\partial P^{a}_{hr}} 
    \frac{\partial P^a_{hr}}{\partial x^a_{nk}s^a_k} + D^t_{n'} \frac{\partial P^{a,n'}_{nhr}}{\partial x^a_{nk}s^a_k}  \right). \label{eq:x_a_nk}
\end{align}
Let $I^{a}_{-n,k} = I_{>0}(\sum_{m\in\mathcal{N},m\ne n} x^a_{mk})$ be the binary indicator indicating whether the result of the 
$k$-th typical input in $\mathcal{K}^a$ is cached outside BS $n$.
Then according to the definition of $P^a_{hr}$ and $P^{a,n}_{nhr}$ we have 
\begin{align}
    \frac{\partial P^a_{hr}}{\partial x^a_{nk}} &= p^a_k(1-I^a_{-n,k}), \label{eq:P_a_hr_1} \\
    \frac{\partial P^{a,n'}_{nhr}}{\partial x^a_{nk}} &= 
    \begin{cases}
        p^a_k(1 - x^a_{n'k})(1 - I^a_{-n,k}) & n' \ne n, \label{eq:P_a_nhr_2} \\
        -p^a_k I^a_{-n',k} & n' = n.
    \end{cases}
\end{align}
Since $x^a_{n'k} = 1$ implies $I^a_{-n,k} = 1$ when $n' \ne n$, the term $p^a_k(1 - x^a_{n'k})(1 - I^a_{-n,k})$ can be simplified to
$p^a_k(1 - I^a_{-n,k})$. Substituting \eqref{eq:P_a_hr_1} and \eqref{eq:P_a_nhr_2} into \eqref{eq:x_a_nk} implies
\begin{align}
    \epsilon(x^a_{nk}) =&
    \frac{p^a_k(1-I^a_{-n,k})}{s^a_k} 
    \Biggl[ \phi^a \lambda^a_{n} y^a_{n} \frac{\partial D^{a,1}_{n}}{\partial P^a_{hr}} \notag \\
    &+\sum_{n'\in\mathcal{N}, n'\ne n} \phi^a \lambda^a_{n'} y^a_{n'} 
    \left( \frac{\partial D^{a,1}_{n'}}{\partial P^a_{hr}} + D^t_{n'} \right) \Biggr] \notag \\
    &- \frac{p^a_k I^a_{-n,k}}{s^a_k} \phi^a \lambda^a_n y^a_n D^t_n, \label{eq:storage_efficiency_2}
\end{align}
where
\begin{align*}
    \frac{\partial D^{a,1}_n}{\partial P^a_{hr}} = &- \frac{w^a}{f^a_n} 
    - \frac{(\lambda^a_n R^a)^2 w^a ((1-P^a_{hr})w^a + w^s)^2}{2 f^a_n (f^a_n - \lambda^a_n R^a ((1-P^a_{hr}) w^a + w^s))^2} \\
    &- \frac{f^a_n (\lambda^a_n R^a)^2 (1-P^a_{hr})(1+P^a_{hr}) w^a}{2 (\mu^{a,0}_n)^2 (f^a_n - \lambda^a_n R^a ((1-P^a_{hr}) w^a + w^s))^2} \\
    &- \frac{f^a_n \lambda^a_n R^a P^a_{hr}}{(\mu^{a,0}_n)^2 (f^a_n - \lambda^a_n R^a ((1-P^a_{hr}) w^a + w^s))} \\
    &- \frac{\lambda^a_n R^a w^a ((1-P^a_{hr})w^a + w^s)}{f^a_n (f^a_n - \lambda^a_n R^a ((1-P^a_{hr}) w^a + w^s))}.
\end{align*}

As discussed above, to minimize the system response time, we should allocate the storage space to those typical inputs $k$
with the smallest negative storage efficiency.
Based on this idea, we can formally prove that the optimal solution of problem $P^c_n$ should satisfy the following property:
\begin{theorem}
    If $\bm{x}^*_n = (x^{a,*}_{nk})_{a\in\mathcal{A}, k\in\mathcal{K}^a}$ denote the optimal solution to $P^c_n$, then 
    there exists a constant $B^*$ such that for all application $a$ and typical input $k$, we have 

    (1) $\sum_{a\in\mathcal{A}, k\in\mathcal{K}^a} x^{a,*}_{nk} s^a_k = C^s_n$;

    (2) if $x^{a,*}_{nk}$ is fractional, i.e. $x^{a,*}_{nk} \in (0,1)$, then $\epsilon(x^{a,*}_{nk}) = B^*$;

    (3) if $x^{a,*}_{nk} = 1$, then $\epsilon(x^{a,*}_{nk}) \leq B^*$;

    (4) if $x^{a,*}_{nk} = 0$, then $\epsilon(x^{a,*}_{nk}) \geq B^*$.
    \label{theorem:optimality}
\end{theorem}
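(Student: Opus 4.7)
The plan is to treat the relaxed problem $P^c_n$ as a continuously differentiable nonlinear program with linear constraints and extract the four conditions as the KKT stationarity and complementary-slackness conditions, identifying $B^*$ with the (negated) Lagrange multiplier of the storage constraint \eqref{cons:storage}. Because $\bm{x}_{-n}$ is fixed in $P^c_n$, each $I^a_{-n,k}$ is a constant, the objective is smooth on the interior of the feasible box, and every constraint is affine, so standard constraint qualifications hold and the KKT conditions are necessary at any local minimum.

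I would first introduce a multiplier $B\geq 0$ for \eqref{cons:storage} and multipliers $\alpha^a_{nk},\beta^a_{nk}\geq 0$ for the box constraints $x^a_{nk}\leq 1$ and $-x^a_{nk}\leq 0$, then write the Lagrangian
\begin{equation*}
L = \sum_{a'\in\mathcal{A}} \phi^{a'}D^{a'} + B\Bigl(\sum_{a,k} x^a_{nk}s^a_k - C^s_n\Bigr) + \sum_{a,k}\alpha^a_{nk}(x^a_{nk}-1) - \sum_{a,k}\beta^a_{nk} x^a_{nk}.
\end{equation*}
By the definition of storage efficiency in \eqref{eq:x_a_nk}, $\partial\bigl(\sum_{a'}\phi^{a'}D^{a'}\bigr)/\partial x^a_{nk} = s^a_k\,\epsilon(x^a_{nk})$, so stationarity at $\bm{x}^*_n$ reduces to $s^a_k\epsilon(x^{a,*}_{nk}) + Bs^a_k + \alpha^a_{nk} - \beta^a_{nk} = 0$. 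Setting $B^* := -B$ and invoking complementary slackness gives the trichotomy in parts (2)--(4) in one stroke: interior points force $\alpha=\beta=0$ and thus $\epsilon = B^*$; points at the upper boundary $x^{a,*}_{nk}=1$ have $\beta=0$ and $\alpha\geq 0$, hence $\epsilon\leq B^*$; points at the lower boundary $x^{a,*}_{nk}=0$ have $\alpha=0$ and $\beta\geq 0$, hence $\epsilon\geq B^*$.

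For part (1), I would exploit the monotonicity established just before the theorem: since the objective is non-increasing in each $x^a_{nk}$, one has $\epsilon(x^a_{nk})\leq 0$ throughout the feasible region. If some optimal $\bm{x}^*_n$ left slack in \eqref{cons:storage}, one could raise any component $x^{a,*}_{nk}<1$ by a small amount without increasing the objective; iterating this operation until no component can be raised while respecting the box and the capacity produces an equally optimal solution saturating \eqref{cons:storage}, so one may take $\bm{x}^*_n$ to satisfy (1). This can also be read off from complementary slackness $B\bigl(\sum_{a,k} x^{a,*}_{nk} s^a_k - C^s_n\bigr)=0$ whenever $B>0$, the monotonicity argument handling the degenerate $B=0$ (i.e.\ $B^*=0$) case.

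The main obstacle I anticipate is the indicator-induced non-smoothness of $P^{a,n'}_{nhr}$ at $x^a_{nk}=0$ for $n'\neq n$: the term $I_{>0}\bigl(\sum_{m\neq n'} x^a_{mk}\bigr)$ still depends on $x^a_{nk}$, and its gradient can jump at the lower boundary whenever BS $n$ is the unique potential caching holder outside $n'$. Verifying that \eqref{eq:P_a_nhr_2} is the correct one-sided derivative along inward feasible directions, and that part (4) only requires stationarity in those directions, is what keeps the KKT argument rigorous at the $x^{a,*}_{nk}=0$ endpoint and completes the proof.
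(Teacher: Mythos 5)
Your proof is correct in substance but takes a genuinely different route from the paper. The paper does not invoke KKT conditions at all: it proves (1) by observing that the objective is non-increasing in each $x^{a}_{nk}$, and it proves (2)--(4) by a direct exchange argument --- if two fractional components $x^{a_1,*}_{nk_1}$, $x^{a_2,*}_{nk_2}$ had unequal storage efficiencies, shifting a small amount $\delta$ of storage from the less efficient to the more efficient one would produce a feasible point with strictly smaller objective, contradicting optimality; (3) and (4) follow from the same perturbation applied at the box boundaries, and $B^*$ is simply taken to be the common value from (2). Your Lagrangian route buys something the paper's argument does not make explicit: it identifies $B^*$ as (the negation of) the multiplier of the storage constraint, which gives a cleaner conceptual justification for the subsequent bisection search over $B$, and it delivers all three cases of the trichotomy uniformly from complementary slackness rather than via three separate exchange arguments. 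What the paper's elementary argument buys in return is robustness: it needs only the sign of the objective change under small feasible perturbations, whereas your stationarity condition requires the objective to be differentiable at $\bm{x}^*_n$ --- and you are right that the indicator $I_{>0}\bigl(\sum_{m\neq n'}x^a_{mk}\bigr)$ inside $P^{a,n'}_{nhr}$ is genuinely problematic at $x^{a,*}_{nk}=0$, where it can jump discontinuously (not merely lose smoothness) when BS $n$ is the sole potential holder outside $n'$. Your fix --- requiring stationarity only along inward feasible directions and reading \eqref{eq:P_a_nhr_2} as a one-sided derivative --- is the right repair and is no less rigorous than the paper's own implicit treatment of the same issue. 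Your handling of (1), replacing a slack optimum by an equally good saturating one, matches the paper's level of rigor (both tacitly assume storage is the binding resource).
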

\begin{proof}
    Please see Appendix \ref{appendix:2}.
\end{proof}
By calculating the derivative of $\epsilon(x^a_{nk})$ one can verify that $\epsilon(x^a_{nk})$ is an increasing function
so there is a one-to-one correspondence\footnote{In fact, $\epsilon(x^a_{nk})$ may not be strictly increasing so there may exist
    multiple $x^a_{nk}$ with the same $\epsilon(x^a_{nk})$. However, these $x^a_{nk}$ also result in the same objective
value so we do not need to discriminate them.} between $\epsilon(x^a_{nk})$ and $x^a_{nk}$.
Based on this result, we can further prove that for each given constant $B$, there is only one $\bm{x}_n$ which satisfies the last three 
statements in Theorem \ref{theorem:optimality}.
For convenience, we use function $g$ to denote the correspondence between $B$ and $\bm{x}_n$, i.e. $\bm{x}_n = g(B)$.
Since $\epsilon(x^a_{nk})$ is increasing, then by the second statement of Theorem \ref{theorem:optimality},
a larger $B$ corresponds to a larger $x^a_{nk}$.
Therefore, the totally allocated storage space $\sum_{a\in\mathcal{A}, k\in\mathcal{K}^a} x^a_{nk} s^a_k$ is also increasing
with respect to $B$.
As a result, we can use the bisection search to find the optimal $B^*$ that ensures 
$\sum_{a\in\mathcal{A}, k\in\mathcal{K}^a} x^a_{nk} s^a_k = C^s_n$.
After that, the optimal solution can be immediately obtained by $\bm{x}^*_n = g(B^*)$ according to Theorem \ref{theorem:optimality}. 

Since we have obtained $B^*$, the next step is to design an algorithm that implements $g(B)$.
The main difficulty comes from the fact that $\epsilon(x^a_{nk})$ is a function of $P^a_{nk}$, which depends on
the value of $x^a_{nk'}$ for all $k'\in\mathcal{K}^a$.
Therefore, the values of different $x^a_{nk}$ are mutually influenced and must be solved jointly.
Although this problem can be very complicated in general, we will show that $\epsilon(x^a_{nk})$ exhibits
a desirable property that substantially facilitates the solving efficiency.


According to \eqref{eq:storage_efficiency_2}, for any given BS $n$ and application $a$, the storage efficiency $\epsilon(x^a_{nk})$ 
for different typical inputs $k$ can be classified into two classes depending on the value of $I^a_{-n,k}$.
If $I^a_{-n,k} = 1$, the first term in \eqref{eq:storage_efficiency_2} becomes zero and $\epsilon(x^a_{nk})$ is reduced to a constant.
Similarly, if $I^a_{-n,k} = 0$, the second term becomes zero and $\epsilon(x^a_{nk})$ equals to $p^a_k/s^a_k$ multiplying
a term that is invariant under different $k$.
In this case, the order of $\epsilon(x^a_{nk})$ for different typical inputs $k$ is determined by $p^a_k/s^a_k$ and is 
independent of $x^a_{nk}$.
Based on such observation, we can solve $\bm{x}_n = g(B)$ by the following process.

Since the caching decisions of different applications are independent, we only determine the values of $x^a_{nk}$ for
one application $a$ each time.
Let $\mathcal{K}^a_1$ and $\mathcal{K}^a_0$ denote the two classes corresponding to $I^{a}_{-n,k} = 1$ and $I^{a}_{-n,k} = 0$, respectively.
If $k\in\mathcal{K}^a_1$, then $\epsilon(x^a_{nk})$ is a constant hence $x^a_{nk} = 1$ when $\epsilon(x^a_{nk}) \leq B$ 
and $x^a_{nk} = 0$ otherwise.
For $k\in\mathcal{K}^a_0$, we first sort them in descending order according to the value of $p^a_k/s^a_k$,
denoted as $\pi_1,\dots,\pi_{K^a_0}$, where $K^a_0 = | \mathcal{K}^a_0 |$ is the number of elements in $\mathcal{K}^a_0$.
Since $\epsilon(x^a_{nk})$ is negative, then by the above analysis we have 
$\epsilon(x^a_{n\pi_1}) \leq \epsilon(x^a_{n\pi_2}) \leq \dots \leq \epsilon(x^a_{n\pi_{K^a_0}})$ whatever the value of $x^a_{nk}$ is.
Suppose $\epsilon(x^a_{n\pi_m}) = B$ for some $m \in \{1,\dots,K^a_0\}$,
then by the last three statements of Theorem \ref{theorem:optimality},
we must have $\epsilon(x^a_{n\pi_l}) = 1$ for all $l < m$ and $\epsilon(x^a_{n\pi_l}) = 0$ for all $l > m$.
Therefore, once we have found the target $m$, we can fix the values of all other $x^a_{nk}$ and then obtain the value of $x^a_{n\pi_m}$
by solving $\epsilon(x^a_{n\pi_m}) = B$. 
Due to the monotonicity of $\epsilon(x^a_{n\pi_l})$ for all $l\in \{1,\dots,K^a_0\}$, the target $m$ can be searched by the bisection method.

\begin{algorithm}[t]
    \caption{Algorithm for Solving $P^c_n$}
    \label{alg:caching}
    \begin{algorithmic}[1]
        \STATE Initialization: $B_l = \epsilon_{min}, B_r = 0$;
        \WHILE{$B_r - B_l \geq \xi$}
            \STATE $B_m = \frac{B_l + B_r}{2}$;
            \STATE Solve $\bm{x}_n = g(B_m)$ using Algorithm \ref{alg:g};
            \IF{$\sum_{a\in\mathcal{A}, k\in\mathcal{K}^a} x^a_{nk} s^a_k < C^s_n$}
                \STATE $B_l = B_m$;
            \ELSE
                \STATE $B_r = B_m$;
            \ENDIF
        \ENDWHILE
        \STATE $B^* = \frac{B_l + b_r}{2}$;
        \RETURN $\bm{x}^*_n = g(B^*)$
    \end{algorithmic}
\end{algorithm}

\begin{algorithm}[t]
    \caption{Algorithm for implementing $g$}
    \label{alg:g}
    \begin{algorithmic}[1]
        \renewcommand{\algorithmicrequire}{\textbf{Input:}}
        \REQUIRE $B, \mathcal{K}^a_0, \mathcal{K}^a_1$
        \FOR{$a\in\mathcal{A}$}
            \FOR{$k\in\mathcal{K}^a_1$}
                \IF{$\epsilon(x^a_{nk}) \leq B$}
                    \STATE $x^a_{nk} = 1$;
                \ELSE
                    \STATE $x^a_{nk} = 0$;
                \ENDIF
            \ENDFOR
            \STATE Sort $k\in\mathcal{K}^a_0$ by $p^a_k/s^a_k$ and obtain $\pi_1,\dots,\pi_{K^a_0}$;
            \STATE $m_l = 1, m_r = K^a_0+1$;
            \WHILE{True}
                \STATE $m = \lfloor \frac{m_l+m_r}{2}\rfloor$;
                \STATE Let $x^a_{n\pi_l} = 1$ for all $l < m$;
                \STATE Let $x^a_{n\pi_l} = 0$ for all $l > m$;
                \IF{$\left. \epsilon(x^a_{n\pi_m})\right\vert_{x^a_{n\pi_m}=0} > B$}
                    \STATE $m_r = m$;
                \ELSIF{$\left. \epsilon(x^a_{n\pi_m})\right\vert_{x^a_{n\pi_m}=1} < B$}
                    \STATE $m_l = m$;
                \ELSE
                    \STATE Solve $x^a_{n\pi_m} = \epsilon^{-1}(B)$;
                    \STATE \textbf{break};
                \ENDIF
            \ENDWHILE
        \ENDFOR
        \RETURN $\bm{x}_n$
    \end{algorithmic}
\end{algorithm}

The detailed algorithm is summarized in Algorithm \ref{alg:caching} and \ref{alg:g}, where $\epsilon_{min}$ is the minimum
storage efficiency of all $x^a_{nk}$ and $\xi$ is the desired accuracy when searching $B^*$.
Notice that the sorting process in line $9$ of Algorithm \ref{alg:g} only needs to be executed once
and the obtained ordering $\pi_1,\dots,\pi_{K^a_0}$ can be reused next time.
Therefore, the time complexity of Algorithm \ref{alg:g} is $O(A(K^a_1+\log K^a_0))$.
As a result, the time complexity of Algorithm \ref{alg:caching} is 
$O(\log \frac{-\epsilon_{min}}{\xi} A(K^a_1+\log K^a_0))$.

The caching decision $\bm{x}^*_n$ returned by Algorithm \ref{alg:caching} is fractional but in practice, a binary solution is required.
Fortunately, by the above analysis, there is at most one fractional component in $\bm{x}^*_n$ so we can obtain a binary solution
$\hat{\bm{x}}_n$ by setting this component to $0$.
The following theorem shows that the performance of $\hat{\bm{x}}_n$ is close to the optimal one.
\begin{theorem}
    Let $\hat{\bm{x}}^*_n$ be the optimal binary caching decision, $s^{max} = \max_{a\in\mathcal{A}, k\in\mathcal{K}^a}s^a_k$
    be the maximum data size of all computation results.
    Suppose $D^0, \hat{D}$ and $\hat{D}^*$ are the weighted response time induced by $\bm{x}_n = \bm{0}, \hat{\bm{x}}_n$ 
    and $\hat{\bm{x}}_n^*$, respectively.
    Then we have
    \begin{equation*}
        \frac{D^0-\hat{D}}{D^0-\hat{D}^*} \geq 1 - \frac{s^{max} |\mathcal{A}_n|}{C^s_n},
    \end{equation*}
    where $\mathcal{A}_n = \{ a\in\mathcal{A} \, | \sum_{k\in\mathcal{K}^a} x^{a,*}_{nk} > 0\}$, i.e. 
    $|\mathcal{A}_n|$ is the number of applications that have cached results at BS $n$.
    \label{theorem:performance_gap}
\end{theorem}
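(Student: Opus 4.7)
The plan is to combine the LP-relaxation inequality $D^*\le \hat D^*$ with a convexity argument applied along the ray from $\bm 0$ to $\bm{x}^*_n$. Since $\hat{\bm{x}}^*_n$ is a feasible point of the relaxed problem $P^c_n$ whose minimum is $D^*$, we have $D^0 - \hat D^* \le D^0 - D^*$, so it suffices to prove the stronger inequality
\[
D^0 - \hat D \;\ge\; \Bigl(1 - \frac{s^{max}|\mathcal{A}_n|}{C^s_n}\Bigr)(D^0 - D^*).
\]

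To quantify the rounding loss, I would first note that by the construction of Algorithm~\ref{alg:g} each application produces at most one fractional entry in $\bm{x}^*_n$. Hence the storage released by rounding satisfies
\[
\sigma^f \;:=\; \sum_{(a,k)\text{ frac}} x^{a,*}_{nk}\,s^a_k \;\le\; s^{max}\,|\mathcal{A}_n|,
\]
and $\hat{\bm{x}}_n$ still consumes $\sigma^b = C^s_n - \sigma^f$ units of storage. I would then introduce the auxiliary point $\tilde{\bm{x}}_n := (\sigma^b/C^s_n)\,\bm{x}^*_n$, which is feasible in the relaxed problem and uses exactly $\sigma^b$ storage. The monotonicity of $\epsilon$ in its own component, stated immediately after Theorem~\ref{theorem:optimality}, translates into convexity of $D$ along the segment $t\mapsto t\bm{x}^*_n$, which yields
\[
D(\tilde{\bm{x}}_n) \;\le\; \tfrac{\sigma^b}{C^s_n}\,D^* + \Bigl(1-\tfrac{\sigma^b}{C^s_n}\Bigr)D^0 \;=\; D^0 - \tfrac{\sigma^b}{C^s_n}(D^0 - D^*).
\]
To close the chain I would argue that $\hat{\bm{x}}_n$ coincides with the output of Algorithm~\ref{alg:g} at the reduced storage budget $\sigma^b$: because rounding the fractional entries of $\bm{x}^*_n$ to zero produces precisely the greedy LP allocation at budget $\sigma^b$, $\hat{\bm{x}}_n$ is LP-optimal for that budget and in particular no worse than the feasible point $\tilde{\bm{x}}_n$, giving $\hat D \le D(\tilde{\bm{x}}_n)$. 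Combining these two estimates,
\[
D^0 - \hat D \;\ge\; \tfrac{\sigma^b}{C^s_n}(D^0 - D^*) \;\ge\; \Bigl(1 - \tfrac{s^{max}|\mathcal{A}_n|}{C^s_n}\Bigr)(D^0 - D^*),
\]
which is the claimed inequality.

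The main obstacle is rigorously establishing the two supporting facts. First, the convexity of $D$ along the line from $\bm 0$ to $\bm{x}^*_n$ must be verified; since $D$ depends on $\bm{x}_n$ only through the aggregate hit rates $P^a_{hr}$ that are linear in $\bm{x}_n$, this reduces to a second-derivative check on the closed-form expression of $D^{a,1}_{n'}$ as a scalar function of $P^a_{hr}$, which is tedious but mechanical. Second, one must justify that truncating $\bm{x}^*_n$ to its fully cached entries does recover the LP optimum at the reduced budget $\sigma^b$; this hinges on the ordering of typical inputs induced by the storage efficiency being stable when the threshold $B^*$ is decreased, a property that follows from the same monotonicity of $\epsilon$ used to drive Algorithm~\ref{alg:g}. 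Once these two technical ingredients are in place, the algebraic chain above yields the theorem directly.
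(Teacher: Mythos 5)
Your opening moves coincide with the paper's: you invoke $D^*\le\hat D^*$ via the relaxation, and you use the fact that each application contributes at most one fractional entry to bound the storage kept by $\hat{\bm{x}}_n$ below by $C^s_n-s^{max}|\mathcal{A}_n|$. Where you diverge is in how the delay reduction retained by $\hat{\bm{x}}_n$ is related to $D^0-D^*$, and it is exactly there that your argument has two genuine gaps. The critical one is the claim that zeroing the fractional entries of $\bm{x}^*_n$ yields the LP optimum of $P^c_n$ at the reduced budget $\sigma^b$. This is stronger than anything the theorem needs and is not justified by the "stable ordering" remark: the marginal storage efficiency $\epsilon(x^a_{nk})$ of the $\mathcal{K}^a_0$ entries is $p^a_k/s^a_k$ times a factor that depends on the application's own hit rate $P^a_{hr}$, so when the budget shrinks and hit rates change, the comparison of efficiencies \emph{across} applications can reorder; no nestedness of optima in the budget is established, and the reduced-budget optimum may trim fully cached entries of one application rather than exactly the fractional entries. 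Nor can you retreat to the weaker inequality $\hat D\le D(\tilde{\bm{x}}_n)$ via monotonicity of the objective, because $\hat{\bm{x}}_n$ does not dominate $\tilde{\bm{x}}_n=(\sigma^b/C^s_n)\bm{x}^*_n$ componentwise (the scaled point keeps positive mass on the fractional entries that you set to zero). The second gap is the convexity of $D$ along the segment from $\bm 0$ to $\bm{x}^*_n$: you defer it to a "mechanical" second-derivative check, but the objective also contains the terms $P^{a,n'}_{nhr}D^t_{n'}$ with indicator functions $I_{>0}(\cdot)$ of sums that include $x^a_{nk}$, so $D$ restricted to that ray is not even continuous at $t=0$ in general, and the M/G/1 term in $D^{a,1}_{n'}$ is not obviously convex in $P^a_{hr}$. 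As it stands, both supporting facts are assertions, and the chain does not close.

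The paper reaches the bound by a shorter route that avoids both issues: it defines the \emph{average} storage efficiency $\bar\epsilon(\bm{x}_n)=\bigl(D^0-D(\bm{x}_n)\bigr)/\sum_{a,k}x^a_{nk}s^a_k$, uses Theorem~\ref{theorem:optimality} to argue that the fractional components of $\bm{x}^*_n$ carry the least favorable marginal efficiency among all cached components (they sit exactly at the threshold $B^*$ while fully cached entries satisfy $\epsilon\le B^*$), so deleting them cannot worsen the delay reduction obtained per unit of storage, i.e.\ $\hat{\bm{x}}_n$ achieves at least the per-unit reduction of $\bm{x}^*_n$. Combining this with $\sum_{a,k}x^{a,*}_{nk}s^a_k=C^s_n$ (statement (1) of Theorem~\ref{theorem:optimality}) and the storage lower bound you already derived gives $D^0-\hat D\ge\bigl(1-s^{max}|\mathcal{A}_n|/C^s_n\bigr)(D^0-D^*)\ge\bigl(1-s^{max}|\mathcal{A}_n|/C^s_n\bigr)(D^0-\hat D^*)$ directly. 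If you want to salvage your version, you would need to prove the per-application diminishing-returns structure rigorously and replace the reduced-budget optimality claim by this kind of efficiency comparison, at which point you have essentially reproduced the paper's argument.
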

\begin{proof}
   Please see Appendix \ref{appendix:3}.
\end{proof}
In practice, the storage capacity $C^s_n$ is usually much larger than the data size of a single input.
For example, the ImageNet dataset \cite{deng2009imagenet} contains $1.4$ million labeled images but only takes up $160$ GB storage space.
Therefore, the average data size of each input is around $0.1$ MB but the RAM space of edge servers usually ranges up to several GB.
In this case, the ratio $(D^0-\hat{D})/(D^0-\hat{D}^*)$ approaches $1$ thus the obtained response time $\hat{D}$ is very close to 
the optimal one $\hat{D}^*$.

Notice that the above algorithm and theoretical analysis only apply to the caching subproblem at a single BS.
To obtain a global caching decision, we can utilize the AM method again and 
apply the above algorithm to optimize the caching decisions for different BSs iteratively.


\subsection{Scheduling Subproblem} \label{subsection:scheduling}
Compared to the caching subproblem, the scheduling subproblem \eqref{subproblem:scheduling} is much easier because the decision
variables are real numbers so gradient-based methods can be used.
Since \eqref{subproblem:scheduling} is a constrained optimization problem, we can apply the Projected Gradient Descent (PGD) algorithm
to find a local minimum.
Compared to the normal gradient descent, PGD takes an extra operation that projects the updated solution to the feasible
region so that all constraints are satisfied.
In our problem, let $(\bm{\lambda}^i, \bm{f}^i)$ be the obtained solution after the $i$-th iteration step, then the next update step is
\begin{equation*}
    (\bm{\lambda}^{i+1}, \bm{f}^{i+1}) = Proj\left[ (\bm{\lambda}^i,\bm{f}^i) - \theta^i \nabla D(\bm{\lambda}^i, \bm{f}^i) \right],
\end{equation*}
where $Proj[\cdot]$ is the projection operator that maps the input to the closest point in the feasible region, $\theta^i$ is the learning rate, 
and $D(\bm{\lambda}, \bm{f}) = \sum_{a\in\mathcal{A}} \phi^a D^a$ is the objective of \eqref{subproblem:scheduling}.

To implement the projection operator, we first replace $f^a_n$ with $\hat{f}^a_n = f^a_n / C^c_n$.
Moreover, the system response time is decreasing with respect to $f^a_n$ so we can assume constraint \eqref{lmp:computing} is tight.
Therefore, constraint \eqref{lmp:computing} can be equivalently expressed as 
\begin{equation}
    \sum_{a\in\mathcal{A}} \hat{f}^a_n = 1, \quad \forall n \in\mathcal{N}.
    \label{cons:computing_eq}
\end{equation}
Combining \eqref{cons:scheduling}, \eqref{cons:computing_eq}, and \eqref{lmp:variable_ylf} we know that the feasible regions
of $(\lambda^a_n)_{n\in\mathcal{N}}$ and $(\hat{f}^a_n)_{a\in\mathcal{A}}$ are both probability simplex~\cite{krstovski2013efficient}.
According to \cite{wang2013projection}, projection to the probability simplex can be computed in $O(d\log d)$ time,
where $d$ is the dimension of the probability simplex.
Hence, in our problem, the time complexity of the projection operator is $O(NA\log(NA))$.

After implementing the projection operator, we still need to 1) decide the learning rate $\theta^i$ and 
2) guarantee the obtained solution satisfies constraint \eqref{cons:stability}.\footnote{Recall that we have relaxed this constraint previously.}
The two issues can be simultaneously addressed by the backtracking line search \cite{armijo1966minimization}.
Intuitively, the backtracking method starts with a relatively large step size and then iteratively shrinks it by a contraction
factor until a sufficient decrease of the objective function is observed.
Since our feasible region is convex, the line between $(\bm{\lambda}^i, \bm{f}^i)$ and $(\bm{\lambda}^{i+1}, \bm{f}^{i+1})$
also lies in the feasible region.
Therefore, we can first choose a relatively large $\theta^i$ and then backtracking along the direction
$\bm{p}^i = (\bm{\lambda}^{i+1}, \bm{f}^{i+1}) - (\bm{\lambda}^i, \bm{f}^i)$.
Let $\alpha\in(0,1)$ and $\beta\in(0,1)$ be the control parameters and $j=0$ be the iteration counter, 
then the standard backtracking algorithm repeatedly increments $j$ until
\begin{equation}
    D(\bm{\lambda}^i, \bm{f}^i) - D((\bm{\lambda}^i, \bm{f}^i) + \beta^j \bm{p}) \geq - \alpha \beta^j \nabla D(\bm{\lambda}^i, 
    \bm{f}^i)^T \bm{p}.
    \label{ineq:backtracking}
\end{equation}
Theoretical analysis guarantees that such $j$ always exists and is denoted by $j_1$.
In addition to the decrease of objective value, an extra requirement of our problem is that the obtained solution 
$(\bm{\lambda}^i, \bm{f}^i) + \beta^j \bm{p}$ should also satisfy constraint \eqref{cons:stability}.
Let $\mathcal{S}$ be the set of $(\bm{\lambda}, \bm{f})$ that satisfies \eqref{cons:stability}.
Notice that $\mathcal{S}$ is an open set, so there exists a neighborhood $U$ of 
$(\bm{\lambda}^i, \bm{f}^i)$ such that $U \subseteq \mathcal{S}$.
Moreover, since $\beta \in (0,1)$,
we can always find a sufficiently large $j_2$ such that for all $j \geq j_2$ we have $(\bm{\lambda}^i, \bm{f}^i) + \beta^j \bm{p} \in U$.
Let $j^* = \max(j_1, j_2)$, then $(\bm{\lambda}^i, \bm{f}^i) + \beta^{j^*} \bm{p}$ satisfies \eqref{ineq:backtracking}
and \eqref{cons:stability} simultaneously.
According to \eqref{eq:delay_no_cache}, \eqref{delay:cache} and \eqref{eq:D^a_n},
$D^a_n$ increases to infinity when $(\bm{\lambda}, \bm{f})$ approaches the boundary of $\mathcal{S}$.
Therefore, the evolution trace of $(\bm{\lambda}^i, \bm{f}^i)$ will stay away from the boundary of $\mathcal{S}$
and makes $j_2$ very small in practice.
As a result, the backtracking algorithm usually terminates in several iterations if we choose a proper $\alpha$ and $\beta$.


\section{Performance Evaluation} \label{section:simulation}

In this section, extensive numerical simulations are conducted to evaluate the effectiveness of the proposed algorithm.
Unless specified otherwise, the simulated CEC network contains $10$ BSs and $5$ applications.
We call a complete solving process of the caching subproblem and the scheduling subproblem as one round.
By default, each experiment lasts for $10$ rounds.
Recall that the algorithms solving the two subproblems are iterative and we execute both algorithms for $10$ iterations.
The rest simulation parameters are listed in Table \ref{tab:sim_param}, where
$\mathcal{U}[a,b]$ refers to the uniform distribution on $[a,b]$
and $\mathcal{N}(\mu,\sigma)$ denotes the normal distribution with mean $\mu$ and standard deviation $\sigma$.
The values of parameters are set in accordance with the empirical results in \cite{al2022promise} and \cite{al2022reservoir}.

\begin{table}[th]
\renewcommand{\arraystretch}{1.1}
\caption{Simulation Parameters}
\label{tab:sim_param}
\centering
\begin{tabularx}{\linewidth}{l|c}
\hline
\textbf{Parameter} & \textbf{Value}\\
\hline
Computational capacity $C^c_n$ & $\mathcal{U}[2,8]$ GHz \\
Storage capacity $C^s_n$ & $\mathcal{U}[2,8]$ GB \\
Number of typical inputs $K^a$ & $\mathcal{U}[10K, 50K]$ \\
Workload of each task $w^a$ & $\mathcal{U}[200M, 600M]$ CPU cycles \\
Worklaod of cache-searching $w^s$ & $25M$ CPU cycles \\
Task arrival rate $R^a_n$ & $\mathcal{U}[0.5, 1.5]$ task/second \\
Hit rate of each result $p^a_k$ & $\mathcal{U}[1.2\times 10^{-5}, 3.6\times 10^{-5}]$ \\
Communication time $D^t_n$ & $\mathcal{U}[10, 30]$ ms \\
Application weight $\phi^a$ & $1$ \\
Bisection search accuracy $\xi$ & $1\times 10^{-9}$ \\
Data size of computation result $s^a_k$ & $\mathcal{N}(0.1, 0.03)$ MB \\
Learning rate $\theta^i$ & $1/\sqrt{i}$ \\
Control parameters $\alpha, \beta$ & $\alpha = 0.3, \beta = 0.5$ \\
\hline
\end{tabularx}
\end{table}

The performance of the proposed algorithm is compared with the following benchmarks:
\begin{itemize}
    \item \emph{No Collaboration (NoC):}
        Computation tasks arrived at each BS must be processed locally and the computation results cached at one BS can
        only be used by itself.
    \item \emph{No computation Reuse (NoR):}
        Computation tasks are processed in a collaborative manner but no computation result is reused.
    \item \emph{Greedy:}
        Each BS caches the computation results with the largest $p^a_k/s^a_k$ until the storage space is used up.
        After that, the workload of each application is allocated to BSs in proportion to their computational capacities.
\end{itemize}
Notice that the initial point of the proposed algorithm is the result of the greedy algorithm.

\subsection{Convergence of the Proposed Algorithm}
\begin{figure}[t]
\centering
\subfloat[Rounds]{\includegraphics[width=0.35\textwidth]{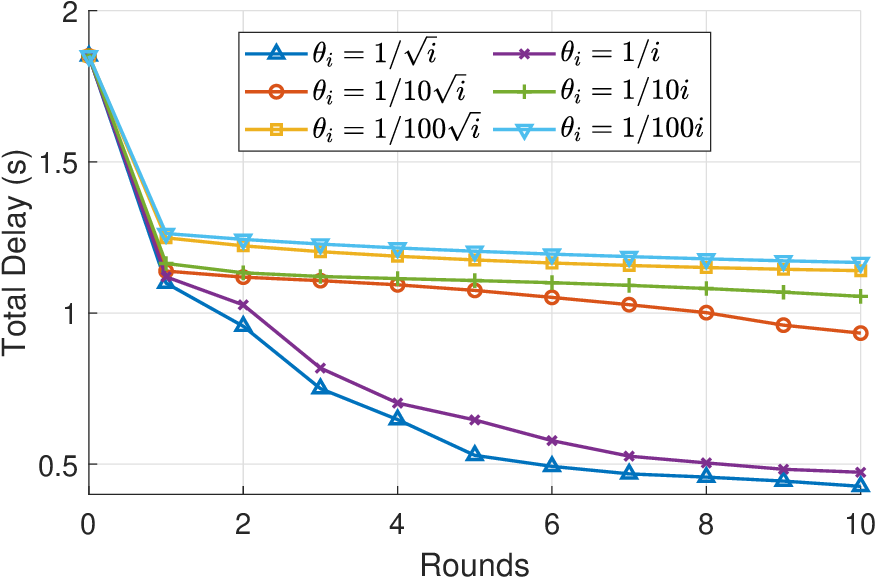} \label{fig:lr_rounds}}
\vfil
\subfloat[Iterations in the first round]{\includegraphics[width=0.35\textwidth]{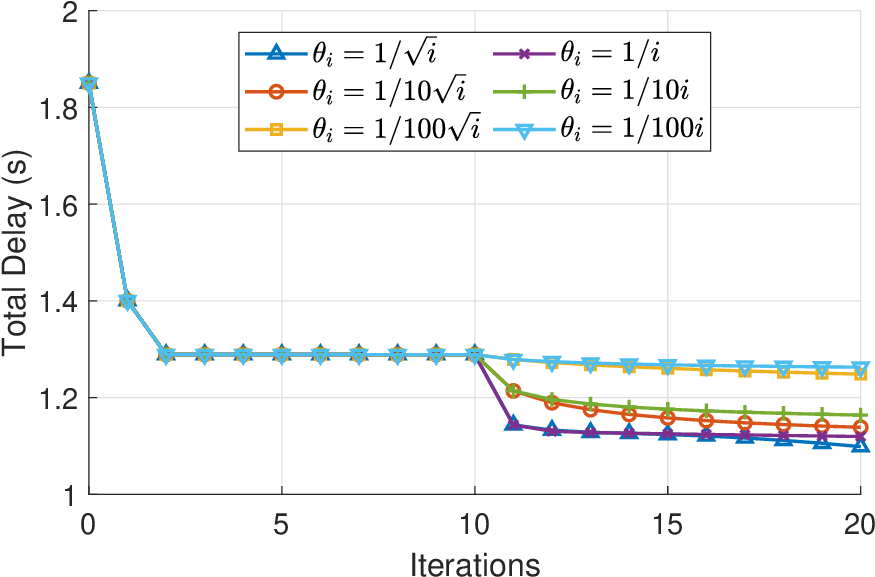} \label{fig:lr_iterations}}
\caption{Convergence of the proposed algorithm under different learning rates.}
\label{fig:lr}
\vspace{-15pt}
\end{figure}
Fig.\ref{fig:lr_rounds} demonstrates the convergence of the proposed algorithm under different rounds.
Thanks to the backtracking line search, the PGD will not oscillate even if we choose a relatively large learning rate.
Therefore, a larger learning rate also corresponds to a faster convergence speed.
To quantify the contributions of the caching and scheduling algorithms, 
Fig.\ref{fig:lr_iterations} presents the total delay under different iterations in the first round.
Recall that the caching and scheduling algorithms are executed in the first and last ten iterations, respectively.
The total delay reduces substantially at the first two caching iterations but remains the same afterward.
The influence of the learning rates is reflected in the scheduling stage.
Similar to the caching stage, the major improvement happens in the first two iterations.
Therefore, we can reduce the number of iterations in each round to achieve faster convergence.

\subsection{Impact of Workload}
\begin{figure}[t]
\centering
\includegraphics[width=0.35\textwidth]{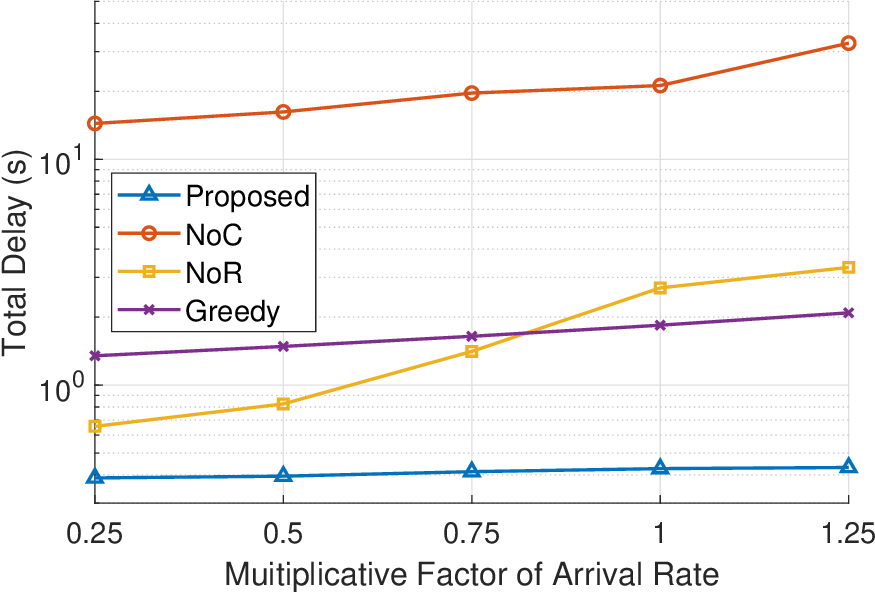}
\caption{Impact of workloads.}
\label{fig:workload}
\vspace{-15pt}
\end{figure}
To demonstrate the advantage of our algorithm,
Fig.\ref{fig:workload} shows the performance of algorithms under different workloads.
The horizontal axis represents the multiplicative factor of the task arrival rate.
For example, when the horizontal axis is $0.5$, the corresponding arrival rate is half of the default value.
In all circumstances, the proposed algorithm achieves the best delay.
When the workload is low, NoR outperforms Greedy, but the order is reversed as the workload becomes higher.
This phenomenon indicates that the benefits of computation reuse are more significant in overloaded systems.
Among the four algorithms, NoC produces a total delay that is an order of magnitude higher than the
results of the rest algorithms.
The main reason is some BSs are heavily loaded and lead to extraordinarily large delays.
This further validates our argument that the performance improvement brought by computation reuse is very limited
without collaboration among BSs.
In addition to the results shown above, we also conduct experiments with an even higher workload, but NoR
and NoC failed to meet the stability requirement of the system.
In contrast, our algorithm still performs well in these situations.

\subsection{Impact of Network Scale}
\begin{figure}[t]
\centering
\includegraphics[width=0.35\textwidth]{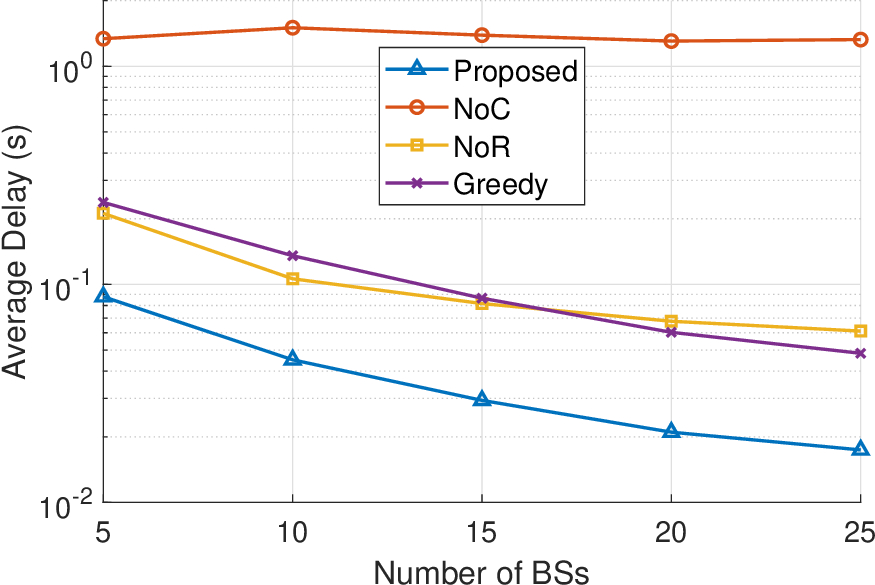}
\caption{Impact of the number of BSs.}
\label{fig:N}
\vspace{-15pt}
\end{figure}
\begin{figure}[t]
\centering
\includegraphics[width=0.35\textwidth]{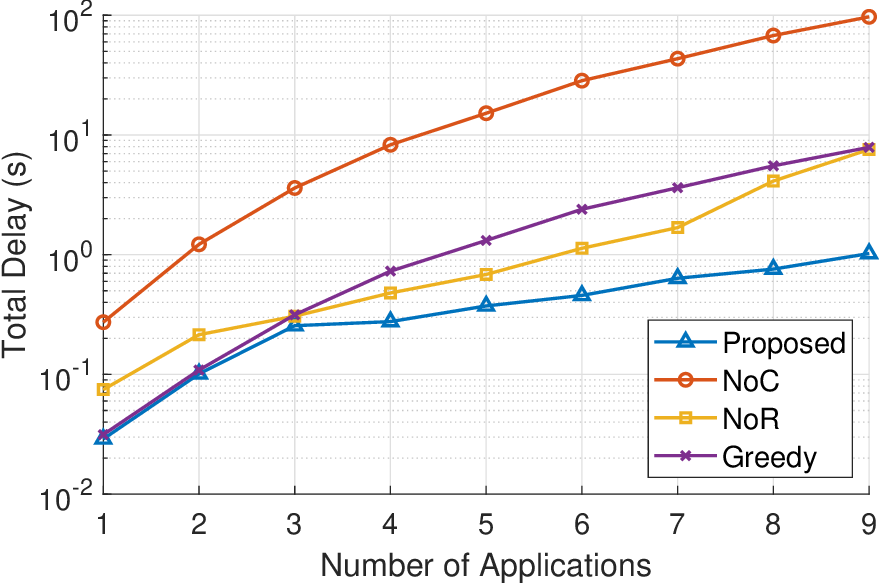}
\caption{Impact of the number of applications.}
\label{fig:A}
\vspace{-15pt}
\end{figure}
To verify the scalability of the algorithms, Fig.\ref{fig:N} presents their performance under different numbers of BSs.
To better illustrate the influence of network scale, the vertical axis is changed to the average delay,
which is the total delay dividing the number of BSs.
Unsurprisingly, the number of BSs has no effect on the average delay of NoC because each BS in NoC is an independent entity.
Since the benefits of collaboration are enhanced when more BSs joined the system,
the performance of the rest three algorithms is improved with the increase of the number of BSs.

Fig.\ref{fig:A} shows the total delay under different numbers of applications.
To exclude the influence of other factors, 
we scale down the arrival rate of each application in proportion to the number of applications
so that the total arrival rate at each BS remains unchanged across experiments.
Since the computational capacity has to be shared by multiple applications, the total delay
increases for all algorithms as the number of applications grows.
However, the proposed algorithm achieves minimal performance degradation compared to the other three.

\section{Conclusions} \label{section:conclusion}
In this paper, we have proposed a computation reuse algorithm to minimize the response time for CEC networks.
We decomposed the formulated problem into the caching subproblem and the scheduling subproblem.
Two efficient approximate algorithms have been proposed to solve the subproblems based on the necessary conditions of optimal solutions.
Numerical simulations indicate the system performance is significantly improved by applying computation reuse in CEC networks.
This result provides a new perspective to enhance the computation efficiency of edge computing systems by exploiting idle storage resources.

\appendices

\renewcommand{\thesectiondis}[2]{\Alph{section}:}

\section{Proof of Theorem \ref{theorem:np}} \label{appendix:1}
    In the considered special case, we have only one BS and one application, so there is no need to offload tasks among BSs and share computational power among applications.
    As a result, the decision variables $\lambda^a_n$ and $f^a_n$ are redundant.
    In a stable system, the total task arrival rate should be lower than the total processing rate.
    This condition reduces to constraint \eqref{cons:stability} when $N=1$ so this constraint is also redundant.
    We further assume that the benefit of cache-searching is greater than the induced cost so we have $y^a_n=1$ in the optimal solution.
    In this case, it can be easily verified that the response time is optimized when we achieve the maximum hit rate.
    Therefore, the problem \eqref{lmp} is equivalent to the following problem:
    \begin{align*}
        \max\quad &P^{a}_{hr} = P^{a,n}_{lhr} = \sum_{k\in\mathcal{K}^a} x^a_{nk}p^a_k \\
        s.t.\quad &\sum_{k\in\mathcal{K}^a} x^a_{nk}s^a_k \leq C^s_n \\
        &x^a_{nk} \in \{0,1\}, \quad \forall k\in\mathcal{K}^a.
    \end{align*}
    The above problem is a standard 0-1 knapsack problem and is known to be NP-hard.
    
\section{Proof of Theorem \ref{theorem:optimality}} \label{appendix:2}
    The first statement holds apparently because the objective function of $P^c_n$ is non-increasing with respect to $x^{a,*}_{nk}$.
    The rest three statements follow a similar argument so we only prove statement (2) for brevity.

    If there is only one component in $\bm{x}^*_n$ is fractional, then we can simply let $A$ equal its storage efficiency.
    Otherwise, we only need to show that for any two components $x^{a_1,*}_{nk_1}$ and $x^{a_2,*}_{nk_2}$ with fractional values,
    their storage efficiency is equal.
    We prove this by contradiction.
    Without loss of generality, we assume $\epsilon(x^{a_1,*}_{nk_1}) < \epsilon(x^{a_2,*}_{nk_2})$.
    Since the values of both decision variables are fractional, we can find a sufficiently small value $\delta > 0$ 
    such that $x^{a_1,*}_{nk_1} + \delta \in (0,1)$ and $x^{a_2,*}_{nk_2} - \delta \in (0,1)$.
    Then we can construct a feasible caching decision $\bm{x}'_n$ from $\bm{x}^*_n$ by replacing 
    $x^{a_1,*}_{nk_1}$ and $x^{a_2,*}_{nk_2}$ with $x^{a_1,*}_{nk_1} + \delta$ and $x^{a_2,*}_{nk_2} - \delta$.
    According to the definition of storage efficiency, the objective value resulted by $\bm{x}'_n$ is smaller than
    that of $\bm{x}^*_n$, which contradicts the assumption that $\bm{x}^*_n$ is the optimal solution.

\section{Proof of Theorem \ref{theorem:performance_gap}} \label{appendix:3}
    Let $D^*$ be the weighted response time induced by $\bm{x}^*_n$.
    Since $\bm{x}^*_n$ is the optimal solution of the relaxed problem, we must have $D^* \leq \hat{D}^*$.
    Hence we only need to prove
    \begin{equation}
        \frac{D^0-\hat{D}}{D^0-D^*} \geq 1 - \frac{s^{max} |\mathcal{A}_n|}{C^s_n}.
        \label{eq:gap}
    \end{equation}
    For convenience, we define the average storage efficiency of any caching decision at BS $n$ as
    the ratio of the decrease in response time to the allocated storage space, i.e.
    \begin{equation}
        \bar{\epsilon}(\bm{x}_n) = \frac{D^0 - D(\bm{x}_n)}{\sum_{a\in\mathcal{A}, k\in\mathcal{K}^a} x^a_{nk} s^a_k}.
        \label{eq:av_eff}
    \end{equation}
    According to our algorithm, there is at most one fractional component in $\bm{x}^*_n$ for each application
    $a\in\mathcal{A}_n$ and the storage space allocated to each fractional component is less than $s^{max}$.
    Therefore, the allocated storage space induced by $\hat{\bm{x}}_n$ satisfies 
    \begin{equation}
        \sum_{a\in\mathcal{A}, k\in\mathcal{K}^a} \hat{x}^a_{nk} > C^s_n - s^{max}|\mathcal{A}_n|.
        \label{eq:space_xhat}
    \end{equation}
    According to Theorem \ref{theorem:optimality}, the storage efficiency of integer components in $\bm{x}^*_n$
    is less than the storage efficiency of fractional components.
    Therefore, the average storage efficiency of $\hat{\bm{x}}_n$ is less than that of $\bm{x}^*_n$,
    i.e. $\bar{\epsilon}(\hat{\bm{x}}_n) \leq \bar{\epsilon}(\bm{x}^*_n)$.
    Combining \eqref{eq:av_eff}, \eqref{eq:space_xhat} and the fact that 
    $\sum_{a\in\mathcal{A}, k\in\mathcal{K}^a} x^{a,*}_{nk} = C^s_n$ results in \eqref{eq:gap}.

\bibliographystyle{IEEEtran}
\bibliography{ref}

\end{document}